\begin{document}

\bibliographystyle{plain}

\def\eps{\varepsilon}
\let\geq\geqslant
\let\leq\leqslant

\def\IR{\mathbb{R}}
\def\IZ{\mathbb{Z}}
\def\cell{\mathcal{C}}
\def\quadtree{\mathcal{T}}
\def\clist{\mathcal{L}}

\newtheorem{theorem}{Theorem}
\newtheorem{lemma}[theorem]{Lemma}
\newtheorem{proposition}[theorem]{Proposition}
\newtheorem{corollary}[theorem]{Corollary}

\makeatletter
\partopsep\z@ \textfloatsep 10pt plus 1pt minus 4pt
\def\section{\@startsection {section}{1}{\z@}{-3.5ex plus -1ex minus
-.2ex}{2.3ex plus .2ex}{\large\bf}}
\def\subsection{\@startsection{subsection}{2}{\z@}{-3.25ex plus -1ex
minus -.2ex}{1.5ex plus .2ex}{\normalsize\bf}}
\def\@fnsymbol#1{\ensuremath{\ifcase#1\or *\or 1\or 2\or
    3\or 4\or 5\or 6\or 7 \or 8\ or 9 \or 10\or 11 \else\@ctrerr\fi}}
\makeatother


\title{Reverse nearest neighbor queries in fixed dimension%
  \thanks{O.C.~and J.Y.~were supported by Mid-career Researcher
    Program through NRF grant funded by the~MEST
    (No.~R01-2008-000-11607-0). 
    The cooperation by A.V.~and J.Y.~was supported by the INRIA
    \emph{\'Equipe Associ\'ee}~KI.}}

\author{Otfried Cheong%
\thanks{Dept.\ of Computer Science, Korea Advanced Institute of
  Science and Technology, Gwahangno 335, Daejeon 305-701, South Korea.  
  Email: \textrm{otfried@kaist.edu}, \textrm{yeon-ju-young@kaist.ac.kr}.}
\and Antoine Vigneron%
\thanks{INRA, UR 341 Math\'ematiques
  et Informatique Appliqu\'ees,
  78352 Jouy-en-Josas, France.
  Email: \mbox{\textrm{antoine.vigneron@jouy.inra.fr}}.}
\and Juyoung Yon\footnotemark[2]}

\date{\today}

\maketitle

\begin{abstract}
  Reverse nearest neighbor queries are defined as follows: Given an
  input point set $P$, and a query point $q$, find all the points $p$
  in $P$ whose nearest point in $P \cup \{q\} \setminus \{p\}$ is $q$.
  We give a data structure to answer reverse nearest neighbor queries
  in fixed-dimensional Euclidean space.  Our data structure uses
  $O(n)$ space, its preprocessing time is $O(n \log n)$, and its query
  time is $O(\log n)$.
\end{abstract}


\section{Introduction}
Given a set $P$ of $n$ points in $\IR^d$, a well-known problem in
computational geometry is nearest neighbor searching: 
preprocess $P$ such that, for any query point $q$, a point in $P$
that is closest to $q$ can be reported efficiently. This problem
has been studied extensively; in this paper, we consider the
related problem of {\em reverse nearest neighbor searching},
which has attracted some attention recently.

The reverse nearest neighbor searching problem is the following. 
Given a query point $q$, we want to report all the points in $P$
that have $q$ as one of their nearest neighbors. More
formally, we want to find the points $p \in P$ such that for all
points $p' \in P \setminus \{p\}$, the distance $|pp'|$ is larger or
equal to the distance~$|pq|$.

The earliest work on reverse nearest neighbor searching is by
Korn and Muthukrishnan~\cite{Korn00}. 
They motivate this problem by applications in databases.
Their approach is based on R-Trees, so
it is unlikely to give a good worst-case time bound. Subsequently,
the reverse nearest neighbor searching problem has attracted
some attention in the database
community~\cite{Benetis06,KumarJG08,Lin08,Stanoi00,Tao07,Xia06,Yang01,Yiu07}.

The only previous work on reverse nearest neighbor searching
where worst-case time bounds are given is the work by 
Maheshwari et al.~\cite{Maheshwari02}. They give 
a data structure for the two-dimensional case, using $O(n)$ space,
with $O(n\log n)$ preprocessing time, and $O(\log n)$ query time. 
Their approach is to show that the arrangement of the largest empty
circles centered at data points has linear size, and then they
answer queries by doing point location in this arrangement.

In this paper, we extend the result of Maheshwari et
al.~\cite{Maheshwari02} to arbitrary fixed dimension.  We give a data
structure for reverse nearest neighbor searching in $\IR^d$, where
$d=O(1)$, using the Euclidean distance. Our data structure has size
$O(n)$, with preprocessing time $O(n \log n)$, and with query time
$O(\log n)$.
It is perhaps surprising that we can match the bounds for the
two-dimensional case in arbitrary fixed dimension.  For \emph{nearest
neighbor queries}, this does not seem to be possible: The bounds for
nearest neighbor searching in higher dimension depend on the
complexity of the Voronoi diagram, which is $\Theta(n^{\lceil d/2
\rceil})$ in $d$-dimensional space.  

Our approach is similar to some previous work on approximate Voronoi
diagrams~\cite{AMM09,Harpeled,HP01}: the space is partitioned using
a compressed quadtree, each cell of this quadtree containing a small 
set of candidate points. Queries are answered by finding the cell
containing the query point, and checking all the candidate points in
this cell. Interestingly, this approach allows to answer reverse
nearest neighbor queries efficiently and {\em exactly}, while it only
seems to give approximations for nearest neighbor searching.

Our model of computation is the real-RAM model, with some additional
operations that are common in quadtree algorithms, such as the floor
function, the logarithm $\log_2$, and the bitwise \textsc{xor}. In particular,
we need to be able to find in constant time the first binary digit
at which two numbers differ.  This allows, for instance, to find in constant
time the smallest quadtree box that contains two given points. For more
details on this issue, we refer to the lecture notes of 
Har-Peled~\cite{Harpeled}, and to previous work related to 
quadtrees~\cite{Bern99,EppsteinGS08}.

\section{Compressed quadtrees}

In this section, we describe compressed quadtrees, a well known 
data structure in computational geometry. A more detailed presentation
can be found in Har-Peled's lecture notes~\cite{Harpeled}, in the article on
skip quadtrees by Eppstein, Goodrich, and Sun~\cite{EppsteinGS08}, or in the
article by Bern, Eppstein and Teng~\cite{Bern99}.
We first describe quadtrees, and then we describe their compressed
version.

We consider quadtrees in $\IR^d$, where $d=O(1)$. We denote by $H_r$
the hypercube $[-1,1]^d$; the leaves of a quadtree will form a
partition of $H_r$.

A {\em quadtree box} is either $H_r$, or is obtained by partitioning
a quadtree box $H$ into $2^d$ equal sized hypercubes---these hypercubes
are called the {\em quadrants} of $H$.
A quadtree is a data structure that stores 
quadtree boxes in a hierarchical manner.
Each node $\nu$ of a quadtree stores a quadtree box $\cell(\nu)$,
and pointers to its parent and its children.
We call $\cell(\nu)$ the {\em cell} of node $\nu$.
In this paper, the cell of the root of a quadtree is always the box $H_r$.
Each node $\nu$  either is a leaf, or has
$2^d$ children that store the $2^d$ quadrants of~$\cell(\nu)$.
With this definition, the cells of the leaves of a quadtree
form a partition of $H_r$.

Let $S$ denote a set of $m$ quadtree boxes.  We can construct
the smallest quadtree whose nodes store all boxes in $S$ as follows.
We start by constructing the root. If $S \subset \{H_r\}$, 
then we are done. Otherwise, we construct the $2^d$ children
of the root. We consider the subset $S_1 \subset S$ (resp. $S_2,S_3,\dots$)
of the boxes in $S$ contained in the first quadrant (resp. second, third,\dots).
We construct recursively the quadtree, by taking the first
(resp. second, third, \dots) 
child as the root and using the set of boxes $S_1$ (resp. $S_2, S_3,\dots$).

The above construction results in a quadtree that stores all the boxes
in $S$.  Even though it is the smallest such quadtree, its size can be
arbitrarily large when $S$ contains very small boxes. To remedy this,
we use a \emph{compressed quadtree}, which allows to bypass long
chains of internal nodes.

In order to reduce the size of the data structure, we allow two
different kinds of nodes in a compressed quadtree.  An \emph{ordinary
node} stores a quadtree box as before.  A \emph{compressed node}
$\nu$, however, stores the difference $H \setminus H'$ of two quadtree
boxes $H$ and $H'$.  We still call this difference the
cell~$\cell(\nu)$.  Compressed nodes are always leaves of the
compressed quadtree.

As in a quadtree, the cells of the children of a node $\nu$
form a partition of $\cell(\nu)$. But two cases are now possible:
either these cells are the quadrants of $\cell(\nu)$, or $\nu$ has two
children, one of them storing a quadtree box $H \subset \cell(\nu)$,
and the other storing $\cell(\nu) \setminus H$.

The construction of a compressed quadtree that stores all the boxes
in $S$ is analogous to the construction of the ordinary quadtree,
with the following difference. Assume we are processing an internal
node $\nu$. Let $H$ denote the smallest quadtree box
containing the boxes in $S$ that are strictly contained in $\cell(\nu)$.
If $H=\cell(\nu)$, then we proceed exactly as we did for the ordinary
quadtree: we construct $2^d$ children corresponding to the quadrants
of $\cell(\nu)$. Otherwise, $\nu$ has two children, one stores $H$, and the 
other is a  compressed node that stores $\cell(\nu) \setminus H$. 
Intuitively, this construction of a compressed node allows us to
``zoom in'' when all the boxes in $\cell(\nu)$ are within a small area,
and avoids a long chain of internal nodes. (See Figure~\ref{fig:quadtree}.)

\begin{figure}[ht]
\centerline{\includegraphics{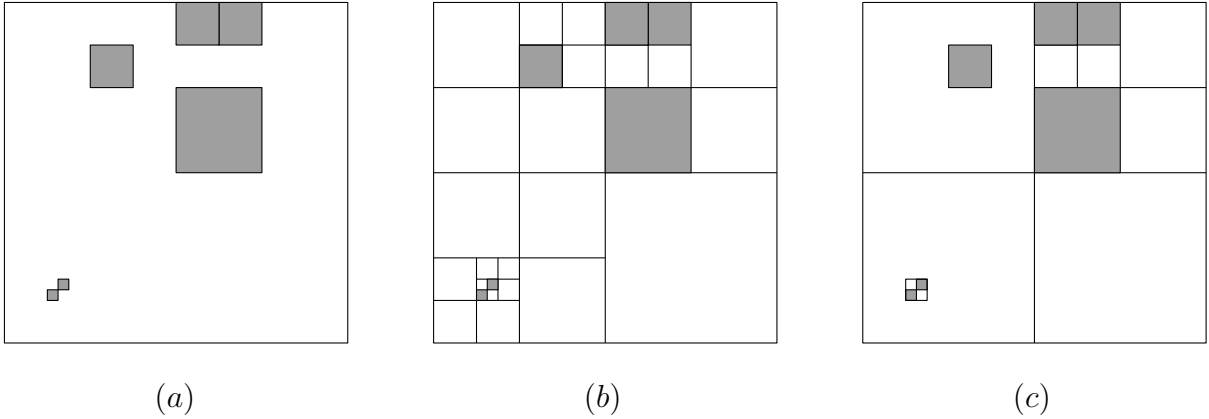}}
\caption{(a) A set of quadtree boxes. (b) The quadtree storing
  these quadtree boxes. (c) The compressed quadtree storing
  the same set of quadtree boxes. The two cells on the left side
  correspond to compressed nodes.\label{fig:quadtree}}
\end{figure}


A direct implementation of the quadtree construction that we just
described would lead to $O(m^2)$ construction time and $O(m)$ query
time, since a quadtree may have linear depth. However, it is possible
to achieve $O(m \log m)$ construction time and $O(\log m)$ query time
using different algorithms for constructing and querying a compressed
quadtree. One such construction is presented in Har-Peled's lecture
notes~\cite{Harpeled}. The idea is first to find a quadtree box
$\cell_0$ that contains a constant fraction of the input boxes~$S$,
which can be done in linear time by choosing the center among the
vertices of an irregular, constant-size grid.  Then one computes
recursively a compressed quadtree for the set $S_{in}$ of the boxes
in~$S$ that are contained in~$\cell_0$, and for $S_{out}=S \setminus
S_{in}$. Finally, these two quadtrees are merged in linear time,
essentially by hanging the quadtree of $S_{in}$ at a leaf
of~$S_{out}$.

The quadtree can be queried in $O(\log m)$ time by constructing a
\emph{finger tree} over the quadtree, which is an auxiliary data
structure to allow faster search. (This approach is also presented in
Har Peled's notes~\cite{Harpeled}.) We first find a cell of the
quadtree such that the subtree rooted at the corresponding node
contains a constant fraction of the boxes in~$S$. This node is called
a \emph{separator} and can be found in linear time by traversing the
quadtree. This construction is repeated recursively on the forest
obtained by removing this node. The construction allows to answer a
query in $O(\log m)$ time, as this tree has $O(\log m)$ depth. So we
have the following bounds for constructing and querying a quadtree:
\begin{lemma}\label{lem:quadtree}
  Let $S$ be a set of $m$ quadtree boxes contained in $H_r$. We can
  construct in time $O(m \log m)$ a compressed quadtree~$\quadtree$,
  such that each box in $S$ is the cell of a node 
  of~$\quadtree$. This compressed quadtree~$\quadtree$ has
  size~$O(m)$. After $O(m \log m)$ preprocessing time, we can find for
  any query point $q \in H_r$ the leaf of~$\quadtree$ whose cell 
  contains~$q$ in time $O(\log m)$.
\end{lemma}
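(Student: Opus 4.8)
\medskip
\noindent\textbf{Proof proposal.}
The plan is to realise in detail the scheme sketched above: build $\quadtree$ by a divide-and-conquer that repeatedly splits $S$ with a balanced \emph{separator box}, and then overlay a finger tree for point location. The argument has three parts. For correctness and the $O(m)$ size bound, the structural fact I rely on is that two quadtree boxes are always either interior-disjoint or nested, so that $S\cup\{H_r\}$ forms a forest under containment. By induction on the recursive construction described above, one checks that every box of $S$ becomes the cell $\cell(\nu)$ of some node, and that each recursion step creates only $O(1)$ nodes while either turning a box of $S$ into a cell or replacing the active cell by the smallest quadtree box enclosing the boxes not yet placed; charging nodes to these events gives $O(m)$ nodes overall. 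Throughout I use the real-RAM primitives (the floor function, $\log_2$, bitwise \textsc{xor}) to compute the smallest quadtree box containing two given boxes, hence to perform each such step, in $O(1)$ time.

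The $O(m\log m)$ construction time is the crux. I would first prove the separator lemma: there is a constant $\alpha=\Theta(2^{-d})$ such that in $O(m)$ time one can find a quadtree box $\cell_0$ with $\alpha m\leq|S_0|\leq(1-\alpha)m$, where $S_0=\{b\in S: b\subseteq\cell_0\}$. Existence is elementary: let $b^\ast$ be a quadtree box of smallest side length with more than $(1-\alpha)m$ of the boxes of $S$ contained in it (the box $H_r$ qualifies). By minimality each of the $2^d$ quadrants of $b^\ast$ contains at most $(1-\alpha)m$ boxes of $S$, while the quadrants together contain all boxes of $S$ contained in $b^\ast$ except possibly $b^\ast$ itself, so the heaviest quadrant contains more than $((1-\alpha)m-1)/2^d\geq\alpha m$ of them once $m$ exceeds a constant; take $\cell_0$ to be that quadrant. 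Computing $\cell_0$ in $O(m)$ time---instead of by the obvious but quadratic search over quadtree boxes---is the step done in Har-Peled's notes~\cite{Harpeled} by choosing the centre of $\cell_0$ among the $O(1)$ vertices of an irregular grid determined by a constant number of well-chosen coordinates of the box centres. Given $\cell_0$, recursively build a compressed quadtree for $S_0$ with root cell $\cell_0$ and one for $S\setminus S_0$ with root cell $H_r$, then graft the former onto the leaf of the latter whose cell contains $\cell_0$ (splitting that leaf by a compressed node if its cell is strictly larger than $\cell_0$); laminarity of $S$ makes this correct. Since $|S_0|+|S\setminus S_0|=m$ with both parts at most $(1-\alpha)m$, the recurrence $T(m)\leq T(m_1)+T(m_2)+O(m)$ solves to $T(m)=O(m\log m)$, and the size remains $O(m)$; the recursion stops in $O(1)$ time once $m$ drops below a constant.

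For queries I would build a finger tree over $\quadtree$ exactly as indicated above. Give each leaf of $\quadtree$ weight $1$, compute all subtree weights in one $O(m)$ bottom-up pass, and find a \emph{separator} node $\sigma$ whose subtree carries a constant fraction of the total weight: descend from the root to the unique child whose subtree still carries more than a $(1-\beta)$ fraction of the weight, then pass to that node's heaviest child, for a suitable constant $\beta=\Theta(2^{-d})$. Removing $\sigma$ splits $\quadtree$ into two pieces, each with at most a $(1-\beta)$ fraction of the weight, and we recurse on both. The resulting search tree has $O(m)$ size and $O(\log m)$ depth and is built in $O(m\log m)$ time, since the pieces at each recursion level have total size $O(m)$. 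To locate $q\in H_r$, walk down this search tree; at each node, test in $O(1)$ time from the coordinates of $q$ and the stored cell whether $q$ lies in the cell $\cell(\sigma)$ of the corresponding separator, and branch accordingly; after $O(\log m)$ steps one reaches the leaf of $\quadtree$ whose cell contains $q$.

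The main obstacle is computing the separator box $\cell_0$ in $O(m)$ time: its existence is easy, but the naive algorithms---searching over quadtree boxes, or greedily descending from $H_r$ while recomputing enclosing boxes---take $\Theta(m^2)$, because the containment forest of $S$ may be a long path or highly bushy, so one has to reproduce the irregular-grid argument, whose constant $\alpha$ degrades exponentially with $d$ but stays fixed for fixed $d$. The remaining ingredients---the laminarity/induction argument bounding the size by $O(m)$, the linear-time graft, and the constant-time smallest-enclosing-quadtree-box primitive on the real-RAM---are routine, but need to be spelled out carefully so that the per-step $O(1)$ costs are justified.
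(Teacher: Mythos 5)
Your proposal follows the same route the paper takes (and itself only sketches, deferring details to Har-Peled's notes): find a separator quadtree box containing a constant fraction of $S$ via the irregular-grid technique, recurse on the inner and outer subsets, merge the two compressed quadtrees in linear time, and answer point-location queries with a finger tree of logarithmic depth. Your filled-in details --- the existence argument for the separator box, the recurrence, and the finger-tree construction --- are correct and consistent with the paper's intended argument.
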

Note that a query point might lie on the boundaries of several cells.
In this case, we break the tie arbitrarily, and we return only one cell
containing~$q$. 

\section{Data structure for reverse nearest neighbor queries}

In this section, we describe the construction of our data structure and
how we answer reverse nearest neighbor queries. This data structure
is a compressed quadtree, with a set of candidate points stored at each 
node. To answer a query, we locate the leaf of the compressed quadtree 
whose cell contains the query point, 
and we check all the candidate points in this leaf; the
reverse nearest-neighbors are among these points.
We start with some notation.

Our input point set is denoted by $P=\{p_1,\dots,p_n\}$, with $n \geq 2$. 
We still work in $\IR^d$, where $d=O(1)$, and so $P \subset \IR^d$.
The {\em empty ball} $b_i$ is the largest ball centered
at $p_i \in P$ that does not contain any other point of $P$ in its
interior. In other words, the boundary of the empty ball centered at $p$
goes through the nearest point to $p$ in $P \setminus \{p\}$. 
In this paper, we only consider closed balls, so 
$p_i$ is a reverse nearest neighbor of a query point $q$ if and
only if $q \in b_i$.

Let $H_P$ be a smallest axis-aligned $d$-dimensional hypercube containing
the input point set $P$. Without loss of generality, we assume
that $H_P=[-1/2\sqrt{d},1/2\sqrt{d}]^d$; then any empty ball is
contained in $H_r=[-1,1]^d$. When $\nu$ is an ordinary node,
we denote by $s(\nu)$ the side length of the quadtree box $\cell(\nu)$.

\begin{figure}
  \centerline{\includegraphics[scale=0.8]{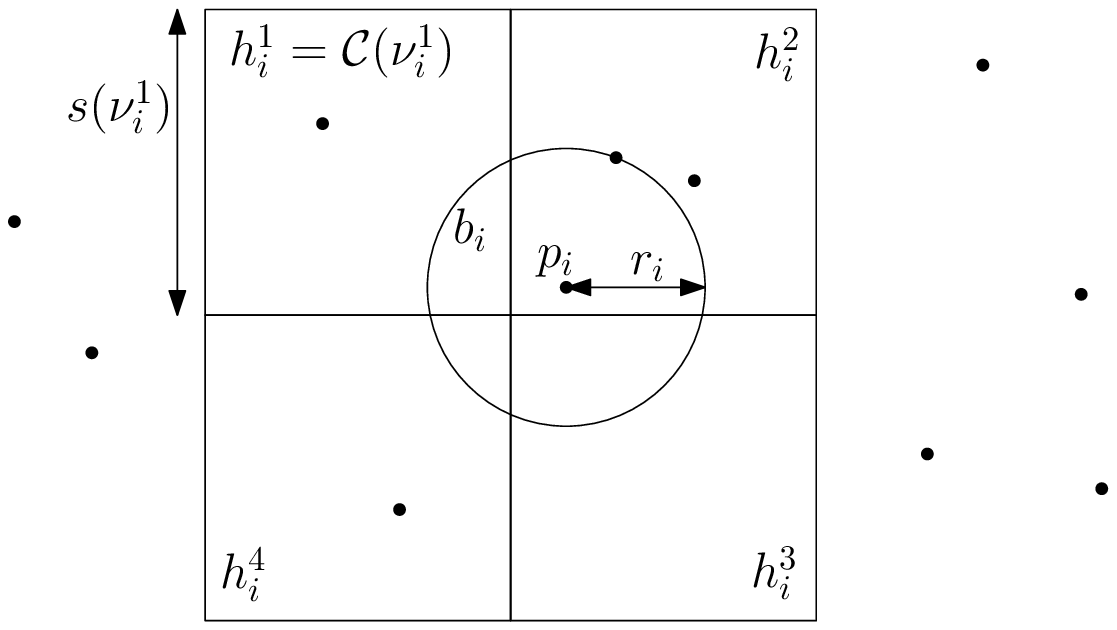}}
  \caption{Notation.\label{f:notation}}
\end{figure}

We first compute the set of all the largest empty balls for $P$. 
This can be done in $O(n\log n)$ time using
Vaidya's all-nearest neighbors algorithm~\cite{Vaidya89}.
We denote by $r_i$ the radius of $b_i$.
For each $b_i$, we compute the quadtree  boxes with side length 
in $[2r_i,4r_i)$ that overlap $b_i$. (See Figure~\ref{f:notation}.)
Our model of computation allows us to do this in $O(1)$ time.
There are at most $2^d$
such boxes; we denote them by $h_i^j, j \in \{1, \dots, 2^d\}$.

Using Lemma~\ref{lem:quadtree}, 
we construct in $O(n \log n)$ time a compressed quadtree $\quadtree$ 
of size $O(n)$  such that each box $h_i^j$ appears in $\quadtree$.
For each node $\nu$ of $\quadtree$, if the corresponding
cell $\cell(\nu)$ is $h_i^j$, we store $p_i$ as a candidate
point for $\nu$.  Storing these candidate points can be done during 
the construction of the quadtree within the same time bound. 
Notice that we may store several candidate
points for a given node $\nu$. 

These sets of candidate points are not sufficient for our purpose,
so we will add some other points. For each ordinary (non-compressed) node
$\nu$, we store the points $p_i$ such that $r_i>s(\nu)/4$ and $b_i$
overlaps $\cell(\nu)$; this list of candidate points is denoted 
by $\clist(\nu)$.  In order to analyze our algorithm, we need
the following lemma, which is proved in Section~\ref{sec:packing}.
\begin{lemma}\label{lem:size}
  For any ordinary node $\nu$, the cardinality of the set of candidate
  points $\clist(\nu)$ stored at~$\nu$ is~$O(1)$.
\end{lemma}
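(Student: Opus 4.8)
The plan is to fix an ordinary node $\nu$, write $s = s(\nu)$ for the side length of its cell, and bound the number of points $p_i \in P$ that can satisfy the defining conditions $r_i > s/4$ and $b_i \cap \cell(\nu) \neq \emptyset$ simultaneously. The strategy is a standard packing argument: every such $p_i$ must lie in a bounded-size neighborhood of $\cell(\nu)$, while the empty balls $b_i$ have pairwise-separated centers once their radii are large enough, so only $O(1)$ of them fit.

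First I would locate the candidate centers. If $b_i$ overlaps $\cell(\nu)$ then the center $p_i$ is within distance $r_i$ of $\cell(\nu)$; combined with $r_i \leq 1$ (all empty balls lie in $H_r = [-1,1]^d$, hence have radius at most $1$) this only bounds $p_i$ to a region of diameter $O(1)$, which is too weak on its own. The fix is to also use a lower bound on the radii relative to $s$. The useful observation is that the points $p_i$ relevant here naturally split into two regimes. For the ``not-too-large'' regime, where $s/4 < r_i \leq cs$ for a suitable constant $c$, the center $p_i$ lies within distance $r_i \leq cs$ of $\cell(\nu)$, so all such $p_i$ lie in a hypercube of side $O(s)$ concentric with $\cell(\nu)$; moreover any two of the corresponding empty balls $b_i, b_{i'}$ have $|p_i p_{i'}| \geq \max(r_i, r_{i'}) > s/4$, since an empty ball centered at $p_i$ contains no other point of $P$ in its interior. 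A volume (packing) argument — a collection of points in a region of diameter $O(s)$, pairwise separated by more than $s/4$ — yields that there are only $O(1)$ of them, the constant depending only on $d$ and $c$.

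For the ``large'' regime, where $r_i > cs$, I would argue that there are only $O(1)$ such points by a different packing bound that does not reference $\cell(\nu)$'s size: the centers are pairwise separated by more than $cs$ (again since each $b_i$ is empty of other points), they all lie within distance $r_i$ of the fixed point set, and — here is where the choice of $c$ matters — because each such $b_i$ must overlap $\cell(\nu)$, whose diameter is $\sqrt{d}\,s$, while $b_i$ has radius more than $cs$, the center $p_i$ is ``visible'' from $\cell(\nu)$ in the sense that $\cell(\nu)$ lies deep inside $b_i$. One then shows that if two such large balls both contain (a point of) $\cell(\nu)$ while excluding each other's centers, their centers subtend a bounded angle argument / lie in a bounded cone, so again only $O(1)$ fit. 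Alternatively, and perhaps more cleanly, one notes that any large empty ball overlapping $\cell(\nu)$ has radius $r_i \leq |p_i q_\nu| + \sqrt d\, s$ for any fixed $q_\nu \in \cell(\nu)$, and emptiness gives $r_i \leq |p_i p_{i'}|$ whenever $p_{i'} \neq p_i$; feeding these into the all-nearest-neighbor packing lemma (the same kind used to bound the size of the sets $h_i^j$) finishes it.

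The main obstacle is choosing the threshold constant $c$ and organizing the geometry of the large-radius regime so that the packing argument goes through without a hidden dependence on $n$ — the small-radius regime is a routine diameter-vs-separation volume count, but the large balls can be very big relative to $\cell(\nu)$, so one must genuinely exploit that they all pass through the tiny cube $\cell(\nu)$ and are mutually center-excluding. I expect the cleanest writeup to reduce both regimes to a single packing statement of the form ``a set of points with pairwise distances at least $t$ that all lie within distance $O(t)$ of a common point has size $O(1)$,'' and to obtain the distance-$O(t)$ containment in the large regime from the observation that $p_i$ lies on the boundary of $b_i$'s nearest neighbor, which itself must be close to $\cell(\nu)$.
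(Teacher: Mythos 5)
Your overall plan---confine the candidate centers to a bounded region, use emptiness of the balls to get pairwise separation of the centers, and pack---is the right one, and it matches the paper's strategy in spirit: the paper proves a packing lemma (a ball of radius $r$ meets at most $2\cdot 5^d$ empty balls of radius at least $r$) and then covers $\cell(\nu)$ by $O(1)$ balls of radius $s(\nu)/4$. Your ``not-too-large'' regime is handled correctly: centers within distance $O(s)$ of the cell, pairwise separation more than $s/4$, volume count. But the large-radius regime, which is the actual crux of the lemma, is not proved. The single packing statement you hope to reduce it to (``pairwise distances at least $t$, all within distance $O(t)$ of a common point'') does not apply there: an empty ball of radius $r_i$ meeting $\cell(\nu)$ only constrains its center to lie within distance $r_i+\sqrt d\,s$ of the cell, and the radii in this regime range over all scales from $cs$ up to a constant, so the centers are pairwise more than $cs$ apart but spread over a region of diameter $\Theta(1)$; a volume count at scale $cs$ gives a bound of order $(1/cs)^d$, which depends on $s$ and hence on $n$. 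The justification you offer for the $O(t)$ containment---that $p_i$'s nearest neighbor lies on the boundary of $b_i$ and ``must be close to $\cell(\nu)$''---is false: that nearest neighbor can sit anywhere on the sphere bounding $b_i$, in particular on the far side, at distance up to about $2r_i$ from the cell.

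What is actually needed for the large-ball, far-center case is an angular packing, which you mention in passing (``bounded angle \dots bounded cone'') but do not carry out, and it is not automatic. The paper's argument is: for two empty balls $b_1,b_2$ of radius at least $r$ meeting a ball $b$ of radius $r$ centered at $c$, with centers at distance at least $2r$ from $c$ and, say, $r_1\leq r_2$, place an auxiliary point $c_2'$ on the segment $\overline{cc_2}$ with $|cc_2'|=|cc_1|$ and consider the shrunken ball of radius $r_2-|c_2c_2'|$ about $c_2'$; it is contained in $b_2$, still meets $b$, and still excludes $c_1$ from its interior, which yields $|cc_2'|\leq 2|c_1c_2'|$ and hence that the radial projections of $c_1$ and $c_2$ onto the unit sphere about $c$ are at distance at least $1/2$, so only $5^d$ such directions fit. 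This step is the substance of the lemma and is missing from your proposal, so as written the proof is incomplete.
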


We construct the lists $\clist(\cdot)$ by traversing $\quadtree$ recursively,
starting from the root. Assume that $\nu$ is the current ordinary node.
The points $p_i$ such that $h_i^j=\cell(\nu)$ for some $j$ have already been
stored at $\nu$. By our construction,
they are the points $p_i$ in $\clist(\nu)$ such that 
$s(\nu)/4<r_i\leq s(\nu)/2$. So we need the other candidate points $p_k$,
such that $r_k > s(\nu)/2$. These points can be found in $\clist(\nu')$,
where $\nu'$ is the parent of $\nu$. So we insert in $\clist(\nu)$ 
all the points $p_k \in \clist(\nu')$ such that $b_k$ overlaps $\cell(\nu)$,
which completes the construction of $\clist(\nu)$. By Lemma~\ref{lem:size},
this can be done in $O(1)$ time per node, and thus overall, computing
the lists of candidate points for ordinary nodes takes $O(n)$ time.

If $\nu$ is a compressed node, and $\nu'$ is its parent, we just
set $\clist(\nu)=\clist(\nu')$. We complete the construction of our
data structure by handling all the compressed nodes.

Given a query point $q$, we answer reverse nearest-neighbor queries
as follows. If $q \notin H_r$, then we return $\emptyset$, because
we saw earlier that all empty balls are in $H_r$. Otherwise, we
find the leaf $\nu$ such that $q \in \cell(\nu)$, which can be done
in $O(\log n)$ time by Lemma~\ref{lem:quadtree}. For each point
$p_i \in \clist(\nu)$, we check whether $p_i$ is a reverse nearest
neighbor, that is, we check whether $q \in b_i$. If this is the
case, we report~$p_i$.

We still need to argue that we answered the query correctly. Assume
that $p_k$ is a reverse nearest neighbor of $q$, and the leaf $\nu$
containing $q$ is an ordinary node. As $q \in b_k$, we have
$q \in h_k^j$ for some $j$, and since the side length of $h_k^j$
is less than $4r_k$, we have $s(\nu) < 4r_k$. Since $b_k$ contains
$q$, it overlaps $\cell(\nu)$, so by definition of $\clist(\nu)$,
we have $p_k \in \clist(\nu)$, and thus $p_k$ was reported.
If $\nu$ is a compressed node, then the same proof works if we replace
$\nu$ by its parent $\nu'$, since $\clist(\nu)=\clist(\nu')$.

The discussion above proves the main result of this paper:
\begin{theorem}\label{th:main}
  Let $P$ be a set of $n$ points in $\IR^d$. We assume that
  $d=O(1)$. Then we can construct in time $O(n \log n)$ a data
  structure of size $O(n)$ that answers reverse nearest-neighbor
  queries in $O(\log n)$ time.  The number of reverse nearest
  neighbors is~$O(1)$.
\end{theorem}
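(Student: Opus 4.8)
Theorem~\ref{th:main} follows by assembling the four claims that the preceding exposition has already isolated: (i) the preprocessing produces a compressed quadtree of size $O(n)$ in time $O(n\log n)$; (ii) the candidate lists $\clist(\cdot)$ have size $O(1)$ per node and can be built in $O(n)$ total time; (iii) a query can be located and answered in $O(\log n)$ time; and (iv) the query answer is correct, i.e.\ every reverse nearest neighbor of $q$ is reported and only genuine ones are. My plan is to verify each of these in turn, drawing on Lemma~\ref{lem:quadtree} for the quadtree bounds and Lemma~\ref{lem:size} for the crucial packing estimate, both of which I am entitled to assume.

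\textbf{Step 1: size and preprocessing time.} I would first normalize so that $H_P=[-1/(2\sqrt d),1/(2\sqrt d)]^d$, which guarantees every empty ball $b_i$ lies inside $H_r=[-1,1]^d$; this is a harmless affine rescaling of $P$. Then I compute all empty balls $b_i$ with Vaidya's all-nearest-neighbors algorithm in $O(n\log n)$ time, and for each $i$ the $\leq 2^d$ quadtree boxes $h_i^j$ of side length in $[2r_i,4r_i)$ meeting $b_i$; since $d=O(1)$ and the real-RAM model supplies the floor function and bitwise operations, each such computation is $O(1)$, for $O(n)$ boxes total. Feeding this collection of $O(n)$ boxes into Lemma~\ref{lem:quadtree} yields a compressed quadtree $\quadtree$ of size $O(n)$ in time $O(n\log n)$ in which every $h_i^j$ is realized as a cell. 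Attaching the "small" candidate $p_i$ to the node whose cell equals $h_i^j$ costs $O(1)$ per box.

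\textbf{Step 2: building and bounding $\clist(\cdot)$.} The lists $\clist(\nu)$ for ordinary nodes are filled by a single top-down traversal: at $\nu$ the points $p_i$ with $s(\nu)/4<r_i\le s(\nu)/2$ are already in place (they are exactly those with $h_i^j=\cell(\nu)$), and any remaining candidate $p_k$ (one with $r_k>s(\nu)/2$, hence $r_k>s(\nu')/4$ for the parent $\nu'$) must already appear in $\clist(\nu')$ because $b_k$ meets $\cell(\nu)\subseteq\cell(\nu')$; so I just copy from the parent those $p_k$ whose ball still meets $\cell(\nu)$. By Lemma~\ref{lem:size} each list has $O(1)$ size, so each node is processed in $O(1)$ time and the whole pass is $O(n)$; the recursion terminates because the quadtree has $O(n)$ nodes. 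For a compressed node I set $\clist(\nu)=\clist(\nu')$, again $O(1)$ per node. Total space remains $O(n)$ since we store $O(1)$ points per node.

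\textbf{Step 3: query time and correctness — the only subtle point.} A query with $q\notin H_r$ returns $\emptyset$, correct since all balls lie in $H_r$. Otherwise Lemma~\ref{lem:quadtree} locates the leaf $\nu\ni q$ in $O(\log n)$ time, and we test the $O(1)$ points of $\clist(\nu)$ against their balls in $O(1)$ time, so the query is $O(\log n)$ and outputs $O(1)$ points; no false positive can be reported because we only report $p_i$ when $q\in b_i$, which is by definition equivalent to $p_i$ being a reverse nearest neighbor. The one place requiring genuine argument is completeness: if $p_k$ is a reverse nearest neighbor of $q$ then $q\in b_k$, so $q$ lies in one of the boxes $h_k^j$ and therefore $s(\nu)<4r_k$ (the leaf cell is no larger than any realized box containing $q$, and $h_k^j$ has side $<4r_k$); combined with the fact that $b_k$ meets $\cell(\nu)$ because it contains $q$, the definition of $\clist(\nu)$ forces $p_k\in\clist(\nu)$, so $p_k$ is reported. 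If $\nu$ is a compressed node, replace it by its parent $\nu'$ throughout and use $\clist(\nu)=\clist(\nu')$. I expect this leaf-size inequality $s(\nu)<4r_k$ to be the one step a careful reader should double-check — it is where the choice of side-length window $[2r_i,4r_i)$ for the boxes $h_i^j$ does its work — but it reduces to the observation that $q$ lies in the cell $h_k^j$, which is a cell of $\quadtree$, and the leaf of $\quadtree$ containing $q$ cannot be strictly larger than any cell of $\quadtree$ containing $q$. Chaining Steps 1--3 gives exactly the stated bounds, completing the proof.
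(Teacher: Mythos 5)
Your proposal is correct and follows essentially the same route as the paper: build the compressed quadtree on the boxes $h_i^j$ via Lemma~\ref{lem:quadtree}, fill the lists $\clist(\cdot)$ top-down using Lemma~\ref{lem:size} for the $O(1)$ bound, and prove completeness by noting that $q\in h_k^j$ forces $s(\nu)<4r_k$ so that $p_k\in\clist(\nu)$. The one step you flag as needing care (the leaf cell being no larger than any cell of $\quadtree$ containing $q$) is exactly the implicit step in the paper's own argument, and your justification of it is sound.
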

The fact that the number of reverse nearest neighbors is $O(1)$ was
known before: In fixed dimension, the in-degree of the vertices of the
nearest neighbor graph is bounded by a constant.


\section{Proof of Lemma~\ref{lem:size}}\label{sec:packing}

In this section, we prove Lemma~\ref{lem:size}, which was needed
to establish the time bounds in Theorem~\ref{th:main}.
We start with a packing lemma.

\begin{lemma}\label{lem:packing}
Let $b$ be a ball with radius $r$. Then $b$ intersects
at most $2 \times 5^d$ empty balls with radius larger or equal
to $r$.
\end{lemma}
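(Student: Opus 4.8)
The plan is to use a standard volume-packing argument. Let $b$ be a ball of radius $r$ centered at a point $c$, and let $b_{i_1}, b_{i_2}, \dots$ be the empty balls that intersect $b$ and whose radii $r_{i_k}$ are all at least $r$. Since each $b_{i_k}$ intersects $b$, its center lies within distance $r + r_{i_k}$ of $c$. The key point is that the centers of empty balls are exactly the input points of $P$, and the defining property of an empty ball is that no other input point lies in its interior; hence for two distinct empty balls $b_i$ and $b_j$, the distance between their centers $p_i$ and $p_j$ is at least $\max(r_i, r_j)$.

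First I would reduce to the case where all the relevant empty balls have radius in a bounded range. Partition the empty balls intersecting $b$ with radius $\geq r$ according to whether their radius lies in $[2^t r, 2^{t+1} r)$ for $t = 0, 1, 2, \dots$. For balls in a single such class, I would get a packing bound: their centers are pairwise at distance at least $2^t r$, and they all lie in a ball of radius $r + 2 \cdot 2^{t+1} r = (1 + 2^{t+2}) r$ around $c$; placing disjoint balls of radius $2^{t-1} r$ around each center, all contained in a ball of radius roughly $(1 + 2^{t+2} + 2^{t-1}) r$, a volume comparison bounds the number of centers in one class by a constant independent of $t$. Summing a geometric-type series over all classes $t \geq 0$ would then give an overall $O(1)$ bound.

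However, the stated bound $2 \times 5^d$ is quite tight, so the crude classification above probably gives a worse constant; to match it I expect the intended argument is sharper and avoids summing over classes. The cleaner approach: if $b_i$ intersects $b$ and $r_i \geq r$, then since $|p_i p_j| \geq \max(r_i, r_j) \geq r$ for any two such balls, the open balls of radius $r/2$ centered at the $p_i$ are pairwise disjoint. Each $p_i$ is at distance at most $r + r_i$ from $c$, but I also need to cap $r_i$. Here the factor of $2$ in the bound suggests splitting on the sign relationship, or observing that one may reduce to $r_i \leq$ some multiple of $r$: any empty ball that intersects $b$ but is much larger than $r$ must contain $c$ in a controlled region, and only a bounded number of empty balls can contain any fixed point in a fixed-dimension arrangement. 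I would formalize this by noting that if $r_i \geq 3r$ then $b$ is nearly swallowed, and handle those separately (bounding them by the in-degree of the nearest-neighbor graph, also $O(1)$), while for $r \leq r_i < 3r$ the centers lie in a ball of radius $4r$ and the disjoint radius-$r/2$ balls around them live in a ball of radius $4.5r$, giving at most $(4.5/0.5)^d = 9^d$ — still not quite $2 \cdot 5^d$, so the true argument must use the sharper constraint $|p_i p_j| \geq r_i$ and a more careful choice of radii.

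The main obstacle, then, is not proving an $O(1)$ bound—that is routine volume packing—but extracting the precise constant $2 \times 5^d$. I expect the authors achieve it by a two-part split: empty balls with $r_i \in [r, 2r)$ (whose centers lie in a ball of radius $3r$, with disjoint radius-$r/2$ balls inside radius $3.5r$, yielding $7^d$—still too big, so even the split must be finer), versus larger ones. Given the difficulty of hitting the exact constant in a sketch, my plan would be to present the geometric-series packing argument cleanly for the qualitative $O(1)$ claim, then note that optimizing the covering radii (using that the separation is $\max(r_i,r_j)$, not $r$, and that a ball of radius $r_i/2$ around $p_i$ fits inside a ball of radius $r + r_i + r_i/2$ around $c$, so the ratio is $(r + 3r_i/2)/(r_i/2) \leq 5$ once $r_i \geq r$) yields exactly the $5^d$ factor, with the leading $2$ accounting for a boundary/tie-breaking subtlety in the packing.
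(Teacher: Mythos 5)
Your proposal contains a genuine gap, and its main fallback argument does not actually work. The dyadic decomposition by radius class $[2^t r, 2^{t+1}r)$ gives, for each fixed $t$, a per-class bound that is roughly $\bigl((1+2^{t+1}+2^{t-1})/2^{t-1}\bigr)^d$, which tends to $5^d$ as $t\to\infty$ rather than decaying; so you are summing a constant over infinitely many classes, and the ``geometric-type series'' does not converge. Radial volume packing alone cannot bound the number of very large empty balls that reach $b$ from far away, because their centers lie in no bounded region. You sensed this difficulty (your repeated attempts to cap $r_i$ at $2r$ or $3r$, yielding $7^d$ or $9^d$), but you never identify the idea that resolves it.

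The paper's proof splits the balls not by radius but by the location of their centers: those whose center lies within distance $2r$ of the center $c$ of $b$, and those whose center lies farther away. The first class is handled exactly as in your packing argument (pairwise center distance $\geq r$, disjoint balls of radius $r/2$ inside a ball of radius $5r/2$, hence at most $5^d$). The second class --- which is precisely the case your argument cannot reach --- is handled by an \emph{angular} separation argument: for two such balls $b_1,b_2$ with $r_1\leq r_2$, one shrinks $b_2$ to a concentric-direction ball $b_2'$ centered at the point $c_2'$ of $\overline{cc_2}$ with $|cc_2'|=|cc_1|$; since $b_2'\subset b_2$ is empty it cannot contain $c_1$, and since it still meets $b$ one deduces $|cc_2'|\leq 2|c_1c_2'|$. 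Projecting all centers radially onto the unit sphere around $c$ then yields points that are pairwise at distance at least $1/2$, so a packing of balls of radius $1/4$ inside a ball of radius $5/4$ bounds this class by $5^d$ as well, \emph{independently of how large the radii are}. This is the step missing from your proposal. Also, the leading factor $2$ is exactly this two-way split by center location, not a boundary or tie-breaking subtlety as you conjectured.
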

\begin{proof}
When $x,y \in \IR^d$, we denote by $|xy|$ the Euclidean distance between
$x$ and $y$, and we denote by $\overline{xy}$ the line segment connecting
them.

We denote by $c$ the center of $b$, and 
we denote by $b'$ the ball with center $c$ and radius $2r$.
We first bound the number of empty balls with radius $\geq r$ 
whose center is contained in $b'$. Let $B$ denote this set
of balls, and let $C$ denote the set
of their centers. Any two points in $C$ 
are at distance at least $r$ from each other. Hence, 
the balls with radius $r/2$ and with centers in $C$ are disjoint.
As they are all contained in the ball $b''$ with center $c$ and radius
$5r/2$, the sum of their volumes is at most the volume of $b''$.
Hence, we have $|C|\leq 5^d$, and thus $|B| \leq 5^d$.

We now consider the empty balls with radius $\geq r$ that intersect
$b$, and whose centers are not in $b'$. We denote by $B'$ the
set of these balls, and we denote by $C'$ the set of their
centers. Let $b_1$ (resp. $b_2$) be a ball in $B'$ with radius
$r_1$ (resp. $r_2$) and center $c_1$ (resp. $c_2$). 
(See Figure~\ref{fig:packing}.)
\begin{figure}[ht]
  \centerline{\includegraphics{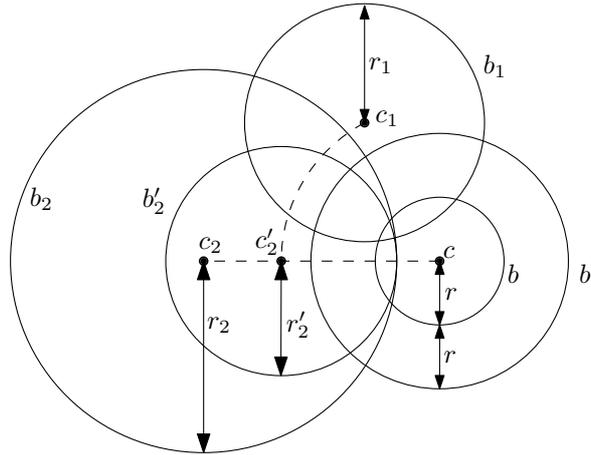}}
  \caption{Proof of Lemma~\ref{lem:packing}.\label{fig:packing}}
\end{figure}
Without loss of generality, we assume that $r_1 \leq r_2$.

Let $c_2'$ be the point of $\overline{cc_2}$ such that $|cc_2'|=|cc_1|$.
Let $r_2'=r_2-|c_2c_2'|$, and let $b_2'$ denote the ball with
center $c_2'$ and radius $r_2'$. As $b_2' \subset b_2$, we know
that $b_2'$ does not contain $c_1$ in its interior, 
and thus $r_2' \leq |c_1c_2'|$. As $b_2'$ intersects
$b$, we have $|cc_2'| \leq r+r_2'$. 
It implies that $|cc'_2|-r \leq |c_1c_2'|$.
Since $|cc_2'| \geq 2r$, it follows that $|cc_2'| \leq 2|c_1c_2'|$.

Let $c_1''$ (resp. $c_2''$) denote the projection of $c_1$ (resp. $c_2$)
onto the unit sphere $u$ centered at $c$. In other words, 
$c_1''=c+(1/|cc_1|)(c_1-c)$. Then it follows from the previous paragraph
that $|c_1''c_2''| \geq 1/2$. Hence, the spheres with radius $1/4$ and 
centered at the projections onto $u$ of the points in $C'$ are disjoint.
As these spheres are contained in the sphere with radius $5/4$ centered
at $c$, we have $|C'| \leq 5^d$, and thus $|B'| \leq 5^d$.
\end{proof}

Now we prove Lemma~\ref{lem:size}: For any ordinary node $\nu$, the number
of candidate points stored in $\clist(\nu)$ is $O(1)$. We assume
that $p_i \in \clist(\nu)$. By definition, we must have $r_i>s(\nu)/4$,
and $b_i$ overlaps $\cell(\nu)$. As $\cell(\nu)$ can be covered by
$O(1)$ balls with radius $s(\nu)/4$, Lemma~\ref{lem:packing} implies
that there can be only $O(1)$ such candidate points.

\section{Concluding remarks}

Our approach does not only give a data structure to answer reverse
nearest neighbor queries, it also yields a {\em reverse Voronoi
diagram}: a spatial subdivision with
linear complexity such that, within each cell, the set of reverse
nearest neighbors is fixed. To achieve this, we construct, within the 
cell of each leaf of our quadtree, the arrangement of
the empty balls of the candidate points. As there is only a constant
number of candidates per cell, each such arrangement has constant
complexity, so overall we get a subdivision of linear size.

The time bounds of our data structure can be improved in the word RAM
model of computation, when the coordinates of the input points are
$O(\log n)$-bits integers. In this case, Chan showed that 
the all-nearest neighbors computation and the compressed quadtree 
construction can be done in linear time, so our data structure can 
be built in linear time as well. Then using the shuffle-and-sort 
approach of Chan~\cite{Chan02}, combined with van Emde Boas trees,
the compressed quadtree in our data structure can be queried in
$O(\log \log n)$ time. So overall, we can construct in linear time
a data structure for reverse nearest neighbors with query time
$O(\log \log n)$.

The most natural extension to this problem would be to handle
different metrics. Our approach applies directly to any norm of
$\IR^d$, with $d=O(1)$, as its unit ball can be made \emph{fat} after
changing the coordinate system: we just need to apply an affine map
such that the John ellipsoid of the unit ball of this norm becomes a
Euclidean ball.  The time bounds and space usage remain the same.

Another possible extension would be to make our algorithm dynamic.
The main difficulty is that it seems that we would need to maintain the
empty balls, which means maintaining all nearest neighbors.
The result of Maheshwari et al.~\cite{Maheshwari02}, combined with
the data structure of Chan for dynamic nearest neighbors~\cite{Chan06},
gives polylogarithmic update time and query time in $\IR^2$.
In higher dimension, these bounds would be considerably worse, if
one uses the best known data structures for dynamic nearest 
neighbors~\cite{Agarwal95,Chan03}.

Finally, it would be interesting to find the dependency of our time
bounds on the dimension~$d$. We did not deal with this issue, because
one would first have to find this dependency for constructing
compressed quadtrees, which is not the focus of this paper.


\begin{thebibliography}{10}

\bibitem{Agarwal95}
P.~Agarwal and J.~Matousek.
\newblock Dynamic half-space range reporting and its applications.
\newblock {\em Algorithmica}, 13(4):325--345, 1995.

\bibitem{AMM09}
S.~Arya, T.~Malamatos, and D.~M. Mount.
\newblock Space-time tradeoffs for approximate nearest neighbor searching.
\newblock {\em Journal of the ACM}, 57:1--54, 2009.

\bibitem{Benetis06}
R.~Benetis, C.~Jensen, G.~Karciauskas, and S.~Saltenis.
\newblock Nearest and reverse nearest neighbor queries for moving objects.
\newblock {\em VLDB Journal}, 15(3):229--249, 2006.

\bibitem{Bern99}
M.~Bern, D.~Eppstein, and S.-H. Teng.
\newblock Parallel construction of quadtrees and quality triangulations.
\newblock {\em International Journal of Computational Geometry and
  Applicaptions}, 9(6):517--532, 1999.

\bibitem{Chan02}
T.~Chan.
\newblock Closest-point problems simplified on the {RAM}.
\newblock In {\em Proc. ACM-SIAM Symposium on Discrete Algorithms}, pages
  472--473, 2002.

\bibitem{Chan03}
T.~Chan.
\newblock Semi-online maintenance of geometric optima and measures.
\newblock {\em SIAM Journal on Computing}, 32(3):700--716, 2003.

\bibitem{Chan06}
T.~Chan.
\newblock A dynamic data structure for 3-d convex hulls and 2-d nearest
  neighbor queries.
\newblock In {\em Proc. ACM-SIAM Symposium on Discrete Algorithms}, pages
  1196--1202, 2006.

\bibitem{EppsteinGS08}
D.~Eppstein, M.~Goodrich, and J.~Sun.
\newblock Skip quadtrees: Dynamic data structures for multidimensional point
  sets.
\newblock {\em International Journal of Computational Geometry and
  Applications}, 18(1/2):131--160, 2008.

\bibitem{Harpeled}
S.~Har-Peled.
\newblock Geometric approximation algorithms.
\newblock Lecture notes, available on the author's webpage.

\bibitem{HP01}
S.~Har-Peled.
\newblock A replacement for {Voronoi} diagrams of near linear size.
\newblock In {\em Proc. Symposium on Foundations of Computer Science}, pages
  94--103, 2001.

\bibitem{Korn00}
F.~Korn and S.~Muthukrishnan.
\newblock Influence sets based on reverse nearest neighbor queries.
\newblock In {\em Proc. ACM SIGMOD International Conference on Management of
  Data}, pages 201--212. ACM, 2000.

\bibitem{KumarJG08}
Y.~Kumar, R.~Janardan, and P.~Gupta.
\newblock Efficient algorithms for reverse proximity query problems.
\newblock In {\em 16th ACM SIGSPATIAL International Symposium on Advances in
  Geographic Information Systems}, page~39, 2008.

\bibitem{Lin08}
J.~Lin, D.~Etter, and D.~DeBarr.
\newblock Exact and approximate reverse nearest neighbor search for multimedia
  data.
\newblock In {\em Proc. SIAM International Conference on Data Mining}, pages
  656--667, 2008.

\bibitem{Maheshwari02}
A.~Maheshwari, J.~Vahrenhold, and N.~Zeh.
\newblock On reverse nearest neighbor queries.
\newblock In {\em Proc. Canadian Conference on Computational Geometry}, pages
  128--132, 2002.

\bibitem{Stanoi00}
I.~Stanoi, D.~Agrawal, and A.~El Abbadi.
\newblock Reverse nearest neighbor queries for dynamic databases.
\newblock In {\em Proc. ACM SIGMOD Workshop on Research Issues in Data Mining
  and Knowledge Discovery}, pages 44--53, 2000.

\bibitem{Tao07}
Y.~Tao, D.~Papadias, X.~Lian, and X.~Xiao.
\newblock Multidimensional reverse {kNN} search.
\newblock {\em VLDB Journal}, 16(3):293--316, 2007.

\bibitem{Vaidya89}
P.~Vaidya.
\newblock An {$O(n \log n)$} algorithm for the all-nearest-neighbors problem.
\newblock {\em Discrete {\&} Computational Geometry}, 4:101--115, 1989.

\bibitem{Xia06}
T.~Xia and D.~Zhang.
\newblock Continuous reverse nearest neighbor monitoring.
\newblock In {\em Proc. International Conference on Data Engineering}, page~77,
  2006.

\bibitem{Yang01}
C.~Yang and K.~Lin.
\newblock An index structure for efficient reverse nearest neighbor queries.
\newblock In {\em Proc. International Conference on Data Engineering}, pages
  485--492, 2001.

\bibitem{Yiu07}
M.~Yiu and N.~Mamoulis.
\newblock Reverse nearest neighbors search in ad hoc subspaces.
\newblock {\em IEEE Transactions on Knowledge and Data Engineering},
  19(3):412--426, 2007.

\end{thebibliography}

\end{document}